\newcommand{\nonre}{\mathbb{R}_{+}}
\newcommand{\eps}{\varepsilon}
\newcommand{\oeps}{(1+\eps)}
\newcommand{\ieps}{(1/\eps)}
\newcommand{\cm}{\mathtt{C}_{max}}
\newcommand{\cmm}{\mathcal{C}_{max}^{nice}}
\newcommand{\bcon}{\textsf{time constraint}}
\newcommand{\nbcon}{\textsf{modified time constraint}}
\newcommand{\epsb}{$\eps-$block}
\newcommand{\uepsb}{universal \epsb}
\newcommand{\onei}{$1-$interval}
\newcommand{\nonei}{$1'-$interval}
\newcommand{\iepsb}{$(1/\eps)-$block}
\newcommand{\opt}{\textsf{opt}}
\newcommand{\sol}{\textsf{sol}}
\newcommand{\copt}{C_{\opt}}
\newcommand{\sch}{\sigma}
\newcommand{\tsch}{\tilde{\sigma}}
\newcommand{\bcs}{\texttt{STS}}
\newcommand{\la}{\mathtt{L}}
\newcommand{\sm}{\mathtt{S}}
\newcommand{\cons}{\mathtt{T_S}}
\newcommand{\conls}{\mathtt{CN_L}}
\newcommand{\conl}{\mathtt{T_L}}
\newcommand{\tj}{\mathtt{TJ}}
\newcommand{\sj}{\mathtt{SJ}}
\newcommand{\mj}{\mathtt{MJ}}
\newcommand{\lj}{\mathtt{LJ}}
\newcommand{\conts}{\mathtt{T}}
\newcommand{\cont}{\mathsf{CN}}
\newcommand{\confs}{\mathtt{C}}
\newcommand{\conf}{\mathsf{CF}}
\newtheorem{lemma}{Lemma}
\newtheorem{theorem}{Theorem}
\newtheorem{corollary}{Corollary}
\title{EPTAS for parallel identical machine scheduling with time restrictions}
\author{G. Jaykrishnan\thanks{Faculty of Industrial Engineering and Management, The Technion, Haifa, Israel. {\bf Email}: jaykrishnang@hotmail.com.}  \and Asaf Levin\thanks{ 
		Faculty of Industrial Engineering and Management, The Technion, Haifa, Israel. {\bf Email: }{levinas@ie.technion.ac.il.} Partially supported by ISF - Israeli Science Foundation grant number 308/18.}}
\date{ }
\begin{document}
	\maketitle

\begin{abstract}
We consider the non-preemptive scheduling problem on identical machines where there is a parameter $B$ and each machine in every unit length time interval can process up to $B$ different jobs.  The goal function we consider is the makespan minimization and we develop an EPTAS for this problem.  Prior to our work a PTAS was known only for the case of one machine and constant values of $B$, and even the case of non-constant values of $B$ on one machine was not known to admit a PTAS.  
\end{abstract}
	
	\section{Introduction}
		The problem considered is the non-preemptive scheduling of $n$ jobs on $m$ parallel identical machines such that each unit time interval cannot intersect more than $B$ jobs on each machine. That is, for a time interval $[\alpha, \alpha+1)$, for $\alpha\in\nonre$ and for a given machine $i$, the number of jobs being (partially or completely) processed on $i$ in this time interval cannot be more than $B$. This constraint is referred to as \bcon{} and the problem is denoted as \bcs\ (Scheduling with Time Restrictions on identical machines). 
Formally, the problem is defined as follows. The input consists of $n$ independent jobs and $m$ identical parallel machines, each job $j$ has a processing time $p_j\in\nonre{}$ also known as the size of $j$, and finally there is a positive integer $B \geq 2$.   A {\it non-preemptive schedule} assigns every job $j$ a pair consisting of a machine and an interval along the time horizon such that the length of the time interval assigned to $j$ is exactly $p_j$ and two such time intervals assigned to a pair of jobs $j,j'$ both of which assigned to a common machine, do not intersect in an inner-point.  The makespan, that is the maximum completion time of a job in such non-preemptive schedule is the goal function that we aim to minimize, and our problem, \bcs\ is to find a non-preemptive schedule subject to \bcon\ saying that in every time interval $I$ of length $1$ and every machine $i$, the number of jobs assigned to $i$ that the schedule assigns to time intervals intersecting $I$ is at most $B$.

A $\rho$-approximation algorithm for a minimization problem is a polynomial time algorithm that always finds a feasible solution of cost at most $\rho$ times the cost of an optimal solution. A polynomial time approximation scheme (PTAS) for a given problem is a family of approximation algorithms such that the family has a $(1+\varepsilon)$-approximation algorithm for any $\varepsilon>0$. An efficient polynomial time approximation scheme (EPTAS) \cite{Cesati97,Downey99,Flum2006} is a PTAS whose time complexity is upper bounded by the form $f(\frac{1}{\varepsilon}) \cdot poly(n)$ where $f$ is some computable (not necessarily polynomial) function and $poly(n)$ is a polynomial of the length of the (binary) encoding of the input. A fully polynomial time approximation scheme (FPTAS) is defined like an EPTAS, with the added requirement that $f$ must be upper bounded by a polynomial in $\frac 1{\varepsilon}$.

The \bcon\  is introduced in	\cite{Braun2014} who consider the special case of our problem  on a single machine. They prove that the the problem is NP-hard when the value of $B$ is a variable and performed worst-case analysis of List Scheduling (LS) algorithm for this case of our problem. Later studies considering \bcon\ were considering only the special case of our problem with one machine (i.e., $m=1$).
 \cite{Benmansour2014} presented a time-indexed mixed integer linear programming formulation for it  (see also \cite{Benmansour2019} for other mixed-integer linear programming formulations of the problem). \cite{Braun2016} improved the analysis for LS and analyzed the worst-case behavior of Longest Processing Time schedules.  A better permutation of jobs that improved the bounds is derived in \cite{Zhang2017}. Most relevant to our work, \cite{Zhang2018} shows that the problem is NP-hard even when $B=2$ and exhibit a PTAS for the problem with any constant $B\geq2$.  We mention the fact that the last PTAS of \cite{Zhang2018} is not an EPTAS and does not lead to a PTAS for the special case of \bcs\ with one machine and a non-constant value of $B$.   We will show here an EPTAS for \bcs\ even for an arbitrary (and no-constant) number of identical machines, and as a by-product improve the approximation guarantee on the problem for one machine.

Minimizing the makespan on identical machines is a special case of \bcs\ that correspond to the case $B=n$.  Hochbaum and Shmoys \cite{Hochbaum87} showed an EPTAS (see also \cite[Chapter~9]{hochbaum97} and \cite{Alon98}) for the makespan minimization objective for this setting. Jansen et al. \cite{JKV16} improved the time complexity of this EPTAS (also for related problems).  Note that minimizing the makespan on identical machines is known to be NP-hard in the strong sense, and thus unless $P=NP$, there is no FPTAS for it and so also for problem \bcs\ that we study here.  Therefore, EPTAS is the fastest class of approximation schemes that can be obtained to our problem.  

Last, the special case where the algorithm knows in advance that the optimal makespan is smaller than $1$ is equivalent to the scheduling problem on identical machines with the constraint saying that every machine can be assigned at most $B$ jobs.  This case known as scheduling with cardinality constraint and its generalization saying that every machine may have different capacity admit EPTAS (on identical machines)
\cite{LJ+16}.  This EPTAS will be used here for one simple case.	See \cite{DM01,He03,DIMM06,KK13,LJ+16,Ka21} for further studies of scheduling with cardinality constraint.

\paragraph{Outline of our scheme and methods.}  Our scheme starts by rounding the job sizes, guessing the makespan value for using the dual approximation approach of \cite{Hochbaum87,Hochbaum88}, and characterizing a sub-class of solutions containing at least one nearly optimal solution, see Section \ref{sec:round}.   Next, we formulate a mixed-integer linear program (MILP) in Section \ref{sec:MILP} based on a hierarchy of configurations.  The use of integer linear programs in fixed dimension and of mixed-integer linear programs with fixed number of integral decision variables is common in the design of EPTAS's for scheduling and load balancing problems, see e.g. \cite{Hochbaum87,Alon98,Jansen10,epstein2014,EL14c,LJ+16,JKV16,Jansen2019,Kones2019}.  Here, we will use the term configuration to encode the schedule of one machine, and we use the term container to refer to a sub-schedule of one configuration restricted to a portion of the time horizon where the job set assigned to it are relatively short.  This will allow us to have a constant number of configurations and a constant number of containers, but the job assignment to machines based on the counters of these terms is not well-defined.  Such notion of hierarchy of configurations was used before, see e.g. \cite{JKMR19}. However, here we have a polynomial number of fractional decision variables and constraints so unlike \cite{JKMR19} we will need to allow most of them to be fractional before applying Lenstra's algorithm \cite{lenstra1983,kannan1983}.  The rounding of a MILP solution into a feasible schedule is the main technical part of our scheme.  The most non-standard step in this rounding phase is the rounding of the tiny jobs based on a non-standard use of the integrality of polytopes defined using an interval matrix and integer right hand side followed by a step analyzed in  \cite{LJ+16}.  While in Section \ref{sec:MILP} we show that there is a feasible solution for the MILP for a correct guessed value  of the makespan, the rounding of the optimal solution of the MILP is presented and analyzed in Section \ref{sec:round-MILP}.

\section{Rounding, guessing, and a characterization of near optimal solutions\label{sec:round}}
\paragraph{Preliminaries: definitions and notation.} Let $\eps>0$ be such that $1/\eps$ is an integer larger than $3$.  A unit time interval is denoted by \textit{\onei{}}. A time interval of length $1+\eps$ that starts at integer multiple of $\eps$ is defined as a \textit{\nonei{}}. A \textit{schedule} assigns jobs to machines and starting times to jobs. Let $J$ be the set of jobs and $M$ the set of machines. The {\em load} of a machine is the finishing time of the last job assigned to the machine.  Note that schedules may have idle time so the load of a machine is not necessarily the total size of jobs assigned to that machine. 

We define blocks of schedules in order to facilitate the definition of containers in the next section. A block of schedule of length $1/\eps$ of one specific machine $i$ is referred to as a \textit{\iepsb{}} and it is the set of jobs that are assigned to start, parts of jobs, and the amount of idle time in $1/\eps$ length of time of the schedule of machine $i$. That is, the set of jobs whose processing intervals on machine $i$ intersects (in a non-empty interval) the given time interval of length $1/\eps$. A block of schedule of one particular machine of length $\eps{}$ starting at integer multiple of $\eps$ is called an \textit{\epsb{}}, i.e., the set of jobs that are assigned to start, parts of jobs, and the amount of idle time assigned to be processed on this machine in the given $\eps{}$ length time interval that starts at an integer multiple of $\eps$ of the time horizon. 

We define a new (more strict as we show in the next lemma) constraint called \textit{\nbcon{}}. A schedule satisfies \nbcon{} if no more than $B$ jobs assigned to a common machine intersect with every \nonei{} of the schedule.

\begin{lemma}\label{BB'}
	If a schedule satisfies \nbcon\ then it also satisfies the \bcon{}.
\end{lemma}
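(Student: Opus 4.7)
The plan is to show that every unit interval $[\alpha,\alpha+1)$ on the time horizon is contained inside some \nonei{} (i.e., an interval of length $1+\eps$ starting at an integer multiple of $\eps$). Once this containment is established, any job whose processing interval intersects the unit interval automatically intersects the containing \nonei{}, so \nbcon{} (which bounds by $B$ the number of jobs intersecting the \nonei{} on a given machine) immediately implies \bcon{} (the same bound on the unit interval).

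The main step is therefore the geometric containment. Given $\alpha \in \nonre$, I would set $k = \lfloor \alpha/\eps \rfloor$, so that $k\eps \le \alpha < (k+1)\eps$. Then the \nonei{} $I' = [k\eps, k\eps + 1 + \eps)$ starts at the integer multiple $k\eps$ of $\eps$ and has length $1+\eps$. Its left endpoint satisfies $k\eps \le \alpha$, and its right endpoint satisfies $k\eps + 1 + \eps \ge \alpha + 1$, because $(k+1)\eps > \alpha$ implies $k\eps + \eps \ge \alpha$ (indeed strictly greater). Hence $[\alpha,\alpha+1) \subseteq I'$.

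Fix an arbitrary machine $i$ and an arbitrary unit interval $[\alpha,\alpha+1)$. Let $I'$ be the \nonei{} constructed above. Any job $j$ assigned to $i$ whose processing interval intersects $[\alpha,\alpha+1)$ necessarily has its processing interval intersecting $I'$, because $[\alpha,\alpha+1) \subseteq I'$. Consequently, the set of jobs on machine $i$ counted against the \bcon{} at the interval $[\alpha,\alpha+1)$ is a subset of the set of jobs on machine $i$ counted against the \nbcon{} at $I'$. By the hypothesis that the schedule satisfies \nbcon{}, the latter set has size at most $B$, so the former does as well.

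I do not expect any real obstacle here; the only subtlety is verifying the boundary condition $k\eps+1+\eps \ge \alpha+1$, which follows from the definition of $k$ as a floor. No use of $1/\eps$ being an integer larger than $3$ is needed for this lemma, only that $\eps > 0$ and the grid of \nonei{}'s tiles the time horizon with overlap $\eps$.
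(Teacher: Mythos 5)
Your proof is correct and follows the same approach as the paper's: show that every unit interval is contained in some \nonei{}, so the job set counted by the \bcon{} is a subset of that counted by the \nbcon{}. You merely spell out the floor-based choice of $k$ that the paper leaves implicit.
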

\begin{proof}
	Every \onei{} that starts at an arbitrary point in time is contained within a \nonei{}. Since the number of jobs being processed in every \nonei{} on a given machine is at most $B$, every \onei{} of that machine contains at most $B$ jobs.
\end{proof} 

\paragraph{The rounding step.}	Round up the size of each job to the next integer power of $1+\eps$. Denote the rounded size of job $j$ by $p'_j$. That is,	$p'_j=\oeps^{\left\lceil \log_{\oeps} p_j \right\rceil}$. 
	Notice that,
$ p_j \leq p'_j \leq \oeps p_j$.

	Denote by $I'$ the rounded instance where the size of all the jobs are rounded as specified. Next, we analyze this rounding step using the standard argument of time stretching.
	
	\begin{lemma}
		Any schedule $\sch$ feasible to instance $I$, of makespan $C_{max}$, can be used to generate a schedule $\sch'$ which is feasible to the rounded instance $I'$ with makespan at most $\oeps C_{max}$. Any schedule $\sch'$ feasible to the rounded instance $I'$, of makespan $C'_{max}$, is feasible to the original instance $I$ with makespan at most $C'_{max}$.
	\end{lemma}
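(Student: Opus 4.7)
The plan is to handle the two directions separately, both via elementary transformations on the schedule.

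For the first direction, I would construct $\sch'$ by a uniform time-stretching: keep the machine assignment of each job $j$ the same as in $\sch$, and if $j$ starts at time $s_j$ in $\sch$, let it start at $s'_j = (1+\eps) s_j$ in $\sch'$ with processing length $p'_j$. Three things need to be checked. First, non-overlap on a common machine: for jobs $j,j'$ on the same machine with $s_j+p_j\le s_{j'}$ in $\sch$, the relation $p'_j \le (1+\eps) p_j$ gives $s'_j + p'_j \le (1+\eps)(s_j + p_j) \le (1+\eps) s_{j'} = s'_{j'}$. Second, the \bcon{}: I would show that any job $j$ whose stretched interval $[s'_j, s'_j+p'_j)$ intersects an arbitrary unit interval $[\alpha,\alpha+1)$ must have its original interval $[s_j, s_j + p_j)$ intersect the (shorter) interval $[\alpha/(1+\eps), (\alpha+1)/(1+\eps))$, which has length $1/(1+\eps) < 1$ and is therefore contained in some unit interval of the original time axis. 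Feasibility of $\sch$ then bounds the count by $B$. Third, the makespan is multiplied by at most $(1+\eps)$ by construction.

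For the second direction, I would simply reuse the machine assignment and start times of $\sch'$, now interpreting the processing times as the original $p_j$. Since $p_j \le p'_j$, the original processing interval of $j$ is contained in its rounded interval, so disjointness on each machine is inherited from $\sch'$. For the \bcon{}, any original interval intersecting a unit interval $U$ implies that the larger rounded interval intersects $U$ as well; hence the jobs intersecting $U$ in $\sch$ form a subset of those intersecting $U$ in $\sch'$, of size at most $B$. Finally, the makespan can only decrease when processing times shrink while start times stay fixed, yielding a makespan at most $C'_{max}$.

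The only step that is not completely routine is the verification of the time constraint after stretching in the first direction; the key subtlety is that stretching the time axis by $(1+\eps)$ shrinks unit windows (in the original time scale) by a factor of $1/(1+\eps)<1$, and it is this strict inequality that allows each such shrunk window to sit inside one original unit window. Everything else reduces to chaining the inequalities $p_j \le p'_j \le (1+\eps) p_j$.
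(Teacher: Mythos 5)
Your proof is correct and uses the same time-stretching transformation as the paper's: multiply start times by $1+\eps$ and round up processing lengths for the forward direction, and keep start times fixed while shrinking processing lengths for the reverse direction. The one place you add value is the verification of the \bcon{} after stretching — you explicitly show that a job intersecting $[\alpha,\alpha+1)$ in $\sch'$ must intersect $[\alpha/(1+\eps),(\alpha+1)/(1+\eps))$ in $\sch$, an interval of length $1/(1+\eps)<1$ that fits in a single unit window — whereas the paper simply asserts that the number of jobs in each unit interval does not increase.
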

	\begin{proof}
		Consider a schedule $\sch$ feasible to instance $I$ of makespan $C_{max}$.   Consider one specific machine $i$ (we apply the process for each machine) and denote its load in $\sch$ by $L_i$. Convert $\sch$ to $\sch'$ by rounding up the sizes of all the jobs, without changing the ordering of the jobs assigned to $i$, and multiplying the starting time of every job by $1+\eps$. This process guarantees that the number of jobs in each \onei{} is not increased. Thus \bcon{} is satisfied by $\sch'$. The new load of $i$ is at most $(1+\eps)L_i$,
		and thus the new makespan satisfies
		$C'_{max} \leq \oeps C_{max}$ as required.
		
For the other direction, consider a schedule $\sch'$ which is feasible to the instance $I'$ of makespan $C'_{max}$ and one machine $i$. The jobs in $\sch'$ are rounded down to get instance $I$ but the starting time of the jobs remain fixed, i.e. same as in $\sch'$. Thus additional idle time is added as necessary during the conversion.  We denote by $\sch$ the resulting schedule of $I$. Since $\sch'$ satisfied the \bcon{}, the number of jobs in every \onei{} of $\sch'$ assigned to $i$ is at most $B$. Since the starting time of the jobs does not change, every \onei{} in $\sch$ assigned to $i$ also contains at most $B$ jobs. Thus, $\sch$ is feasible and the makespan of $\sch$ is at most $C'_{max}$.
	\end{proof}
	
With a slight abuse of notation we assume without loss of generality that the input $I$ is already rounded, that is, $I=I'$ and so $p_j$ will denote the {\em rounded size of job $j$}.	
	
\paragraph{Guessing step.}
We guess the makespan of the schedule rounded up to an integer power of $1+\eps$. An a-priori valid upper bound on the makespan is $n + \sum_{j\in J}p_j$. The lower bound on the makespan is
$\frac{\sum_{j\in J}p_j}{m}$.  Using these bounds will lead to a weakly polynomial upper bound on the number of possibilities for this guessed information.  Thus, we consider the following bound.  Let $i$ be a machine attaining the makespan in a fixed optimal solution and let $p_{\max}$ be the maximum size of a  job in $I'$.  Since the number of jobs assigned to $i$ is between $1$ and $n$, we conclude that the makespan is not smaller than $p_{\max}$ and not larger than $n(1+p_{\max})$.  If the guessed makespan is smaller than $1$, we will have a different procedure that does not use the guessed value of the makespan. Thus, the number of guessed values of at least $1$ is not larger than $\log_{1+\eps} (2n)$.  We conclude the following lemma.

\begin{lemma}
	The number of possible values for the guess of the makespan that are at least $1$ is at most $O(\log_{1+\eps} n)$, that is there are strongly polynomial number of possibilities.
\end{lemma}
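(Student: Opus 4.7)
The plan is to bound the number of integer powers of $1+\eps$ lying in the range of valid makespan guesses, namely the interval $[p_{\max}, n(1+p_{\max})]$, intersected with $[1,\infty)$. I would split into two cases depending on whether $p_{\max} \geq 1$ or $p_{\max} < 1$, since the lemma only counts guesses that are at least $1$.

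In the first case, $p_{\max} \geq 1$, the entire range $[p_{\max}, n(1+p_{\max})]$ already consists of values of magnitude at least $1$. Since $1+p_{\max} \leq 2p_{\max}$, the ratio of the upper endpoint to the lower endpoint is at most $2n$, so the number of integer powers of $1+\eps$ in this interval is at most $\lceil \log_{1+\eps}(2n) \rceil + 1 = O(\log_{1+\eps} n)$. In the second case, $p_{\max} < 1$, any guess of value at least $1$ lies in $[1, n(1+p_{\max})] \subseteq [1, 2n]$, so once again the ratio of upper to lower endpoint is at most $2n$ and the same counting gives $O(\log_{1+\eps} n)$ integer powers.

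Finally I would remark that the bound $O(\log_{1+\eps} n)$ depends only on the number of jobs $n$ and on $\eps$, and not on the numerical magnitudes of the job sizes in the input; in particular $\log_{1+\eps} n = \Theta\!\left(\tfrac{\log n}{\eps}\right)$ for the range of $\eps$ we consider, which is polynomial in $n$ and in $1/\eps$, justifying the term strongly polynomial. I do not anticipate any real obstacle here: the argument is essentially the direct counting of geometric-progression terms in a bounded ratio interval, and the only subtlety worth flagging is the small-$p_{\max}$ case, which is handled by the separate treatment of guesses smaller than $1$ mentioned immediately before the lemma statement.
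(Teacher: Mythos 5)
Your proposal is correct and follows essentially the same route as the paper: both derive that the makespan guess lies in $[p_{\max}, n(1+p_{\max})]$, observe that restricting to guesses of value at least $1$ makes the ratio of the largest to the smallest candidate at most $2n$, and count the integer powers of $1+\eps$ in that range. Your explicit case split on $p_{\max} \geq 1$ versus $p_{\max} < 1$ is a helpful elaboration of the step the paper states tersely, but it is the same argument.
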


The guess of the makespan is denoted by $\cm$. If $\cm<1$, then we apply the scheme of \cite{LJ+16} with a uniform capacity bound of $B$ on each machine, to obtain an EPTAS for our problem.  In what follows, we assume that $\cm\geq 1$ and observe that the number of such possibilities is at most $O(\log_{1+\eps} n)$ based on the last lemma.

\paragraph{Classification of jobs.}
Let $\theta = \max\{\eps\cm,\ieps\}$. A job $j$ is classified as
\textit{tiny} if $p_j \leq \eps^2$, \textit{small} if $\eps^2 < p_j \leq 1/\eps$, \textit{medium} if $1/\eps < p_j \leq \theta$, and \textit{large} if $\theta < p_j \leq \cm$. There can be instances where the interval of sizes of medium jobs and perhaps the interval of sizes of large jobs are empty. In particular when $\cm$ is at most $1/\eps$ there are no medium or large jobs. 

The number of distinct sizes of small jobs is at most $\log_{\oeps}\ieps^3 + 1$. Let $\sm$ denote the set of distinct sizes of small jobs. The number of distinct sizes of large jobs is at most $\log_{\oeps} \ieps + 1$. Let $\la$ be the set of distinct sizes of large jobs.  So $|\sm|$ and $|\la|$ are upper bounded by functions of $\eps$.  The number of distinct sizes of medium or tiny jobs cannot be upper bounded by a function of $\eps$ and these values will be upper bounded by polynomial (e.g., at most $n$ such distinct sizes).

Let $\tj$, $\sj$, $\mj$, $\lj$ be the set of tiny, small, medium, and large jobs, respectively. Any of the aforementioned job sets with a subscript $l$ denotes the subset of jobs of that set of size $l$.

\paragraph{Definition of nice solutions.}
A schedule is \textit{nice} if the schedule on each machine satisfies the following four conditions.
\begin{enumerate}
	\item The jobs in the schedule appear in the order of tiny or small jobs followed by medium jobs and then large jobs.

	\item In every \epsb{} of the schedule the idle time is at the beginning of the \epsb{} or at the completion time of the job that is being processed at the start of the \epsb{} and the jobs that start in the \epsb{} continue their processing continuously, in non-decreasing order of processing time.

	\item There is unit idle time between processing a pair of jobs if there is a difference of at least $1/\eps$ time between their starting times. 

	\item The schedule satisfies the \nbcon{}.
\end{enumerate}

We will use the set of nice solutions as a family of highly-structured solutions with the property that approximating the best solution among this set is sufficient for approximating the original problem.  Thus, we base our approach on the following lemma.

\begin{lemma}\label{nic_sol}
	Every feasible schedule $\sch$ of makespan $C_{\sch}$ can be converted to a nice schedule $\sch'$ of makespan at most $(1+17\eps) C_{\sch}$.
\end{lemma}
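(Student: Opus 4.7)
The plan is to turn $\sch$ into a nice schedule $\sch'$ via a short sequence of transformations applied independently on each machine, each preserving feasibility with respect to the preceding conditions and inflating the makespan by either an additive $O(\eps)C_{\sch}$ term or a multiplicative $(1+\eps)$ factor. I would introduce the four niceness conditions one at a time. First, to enforce Condition 1, I would rearrange the jobs on each machine so that the tiny and small jobs appear first (in their relative order from $\sch$), followed by the medium jobs, then the large jobs, as three contiguous blocks separated by a small amount of slack. Packing tiny jobs naively can violate the \bcon{}, since several of them can fit inside a single \onei{}; to handle this I would preserve the original relative order of the tiny and small jobs and insert an additive $O(\eps)C_{\sch}$ slack in the tiny/small block, using the job-size classification and the \bcon{} of $\sch$ to bound the density and ensure no \onei{} of the rearranged schedule hosts more than $B$ jobs.

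Next, to enforce Condition 2, I would perform a within-block repacking: in every \epsb{} of every machine, permute the jobs that start in that block so that they run contiguously and in non-decreasing order of size, and push any remaining idle time either to the start of the block or to the completion time of the carryover job. Each such permutation is internal to a single \epsb{}, so the machine load is unchanged and both Condition 1 and the \bcon{} are retained. To enforce Condition 3, I would scan consecutive jobs on each machine and insert one unit of idle time between any pair whose starting times differ by at least $1/\eps$. Such pairs can only occur adjacent to medium or large jobs (of size exceeding $1/\eps$), and each machine hosts at most $C_{\sch}/(1/\eps)=\eps C_{\sch}$ such jobs, so the inserted idle time per machine is at most $\eps C_{\sch}$.

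Finally, to enforce Condition 4, I would apply a $(1+\eps)$-stretch, multiplying every starting time by $(1+\eps)$; the makespan grows by at most the factor $(1+\eps)$. For the \nbcon{}, consider a \nonei{} $I=[a,a+1+\eps)$ of the stretched schedule and its preimage $U=[a/(1+\eps), a/(1+\eps)+1)$ under the stretch, which is a \onei{} of the pre-stretched schedule. A job starts in $I$ in the stretched schedule iff it started in $U$ in the pre-stretched one; moreover, any job that carries over the left boundary of $I$ in the stretched schedule must also have carried over the left boundary of $U$ in the pre-stretched one. Since $U$ hosts at most $B$ jobs (its own carryover plus the jobs starting in $U$) by the \bcon{}, and the new carryover coincides with the old carryover, no more than $B$ jobs intersect $I$, as required by the \nbcon{}.

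The main obstacle is Step 1: rearranging tiny jobs together can violate the \bcon{}, which is itself essential for the stretching in Step 4 to deliver the \nbcon{}. I would handle this by the careful reordering and slack insertion sketched above, leveraging the classification of job sizes and the \bcon{} of $\sch$ to control the density within the rearranged tiny/small block, while ensuring that the additional slack stays bounded by $O(\eps)C_{\sch}$ per machine. Summing the multiplicative $(1+\eps)$ factor of Step 4 with the additive $O(\eps)C_{\sch}$ slacks contributed by the earlier steps then yields the claimed $(1+17\eps)C_{\sch}$ bound after a careful accounting of the constants.
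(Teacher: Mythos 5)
Your central idea for Condition 4 — a multiplicative $(1+\eps)$-stretch of starting times to convert the \bcon{} into the \nbcon{} — is correct and elegant, and is a legitimate alternative to the paper's mechanism (inserting $2\eps$ of idle time at each integer point of the original schedule). However, there are genuine gaps elsewhere in the proposal. The most serious one is Step 1, which you yourself identify as the main obstacle but do not resolve: extracting the medium and large jobs and compressing the tiny/small jobs into a contiguous prefix can pack far more than $B$ jobs into a unit interval, and the amount of slack you would need to re-spread them to restore the \bcon{}, and a proof that this slack is only $O(\eps)C_\sch$, is exactly the missing content. The paper circumvents this with a different mechanism: it first inserts $2$ units of idle time at the start and the completion time of every medium/large job (cheap, because such jobs have size at least $1/\eps$, costing only $4\eps L_i$), and only then removes those jobs and shifts the remainder backward. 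The leftover idle time of length $2 > 1+\eps$ between the formerly separated segments of tiny/small jobs guarantees that no \nonei{} of the shifted schedule straddles two such segments, so the \nbcon{} is preserved automatically with no density argument at all. Your sketch contains nothing that plays this role.

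Two further issues concern the ordering of your steps. First, applying the $(1+\eps)$-stretch as the last step undoes Condition 2: stretching start times while keeping processing times fixed introduces gaps inside \epsb s, so jobs no longer run contiguously in non-decreasing size order; Step 2 would have to be re-applied after the stretch, and you would then have to re-argue that this repacking preserves the \nbcon{} (the paper does exactly this, establishing the \nbcon{} in its first step and then verifying each subsequent step preserves it). Second, the claim in Step 3 that pairs of consecutive jobs whose start times differ by at least $1/\eps$ ``can only occur adjacent to medium or large jobs'' is not correct once the slack of Step 1 has been distributed through the tiny/small block, or when consecutive small jobs have size close to $1/\eps$; the quantitative conclusion may still be salvageable, but the stated justification does not hold. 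In contrast, the paper handles Condition 3 by adding $2$ units of idle at each integer multiple of $1/\eps$ of the current schedule, which directly enforces the condition at a cost of $8\eps L_i$ without any assumption about which job types border the long gaps.
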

\begin{proof}
	Consider a feasible schedule $\sch$ of makespan $C_{\sch}$. Perform the following operations for each machine $i$. Let the load of machine $i$ be $L_i$, and the schedule for machine $i$ in $\sch$ be $\sch_i$. Feasibility ensures that the starting time of the jobs and the idle time in $\sch_i$ are such that the schedule satisfies the \bcon{} and the scheduling constraints (i.e., this is a non-preemptive schedule on machine $i$). Convert $\sch_i$ to a schedule with the same set of jobs on machine $i$ as follows.  
	\begin{outline}[enumerate]
		\1 \label{eps_idle_time} At integer points of time in $\sch_i$ add idle time of $2\eps$ as follows. At each integer point of time $t$ there can be one of the following two cases.
		
		\begin{itemize}
			\item When no job is being processed at $t$. Add idle time of $2\eps$ at $t$.
			\item When a job $j$ is being processed at $t$. Add idle time of length $2\eps$ at the completion time and the starting time of $j$.
		\end{itemize}
	We do this step simultaneously to all points along the schedule that were integer points in time before this step (i.e., in $\sch_i$). No matter how the starting time of the added idle time is modified the added idle time will contain an \epsb{} of idle time. This step can increase the length of the schedule by at most $4\eps$ for each integral point of time in $\sch_i$ so the load of $i$ will increase by at most $4\eps L_i$. 
		
		\1 \label{ieps_idle_time} At the starting time and completion time of medium and large jobs add idle time of length $2$. This step increases the load of $i$ by at most $4\eps L_i$ since a size of a medium/large job is at least $1/\eps$.
		
		\1 \label{med_lar_rem} For each medium and large job do the following. Remove the job from the schedule and then shift the remaining schedule backward to the starting time of the job. 
		After the end of the partial schedule obtained by removing all the medium and large jobs add a unit idle time and then place the medium and large jobs in non-decreasing order of the size. 
		
		This step basically pulls the medium and large jobs from the schedule and move them to the end of the schedule (of the same machine $i$).  The ordering of small jobs, tiny jobs, and idle time (until the first medium or large job) is kept as the outcome of step \ref{ieps_idle_time}. The unit length idle time before the first medium/large job adds at most $\eps L_i$ to the load of $i$.
		
		\1 \label{ieps_idle_time2} At integer multiples of $1/\eps$ in the schedule obtained from $\sch_i$ after Step \ref{med_lar_rem} add idle time as follows. At each integer multiple $t$ (simultaneously for all $t$) one of the following two cases hold.
		
		\begin{itemize}
			\item When no job is being processed at $t/\eps$, add $2$ units of idle time at $t/ \eps$. 
			
			\item When a job $j$ is being processed at $t/ \eps$, add idle time of length $2$ at the finishing time of $j$. 
		\end{itemize}
		No matter how the starting time of the added idle period will be modified in the future it will contain a \onei{} of idle time starting at an integer multiple of $\eps$. This step increases load of the machine by at most $8\eps L_i$ since the load of $i$ after the previous step was at most $(1+9\eps)L_i \leq 4L_i$.  
		
		\1 \label{rearr} For each $\eps-$block do the following. Calculate the total idle time in the \epsb{}. If there is a job that starts in an earlier \epsb{} and finishes in the current \epsb{}, then move this total idle time to the finishing time of that job, else to the beginning of the \epsb{}. Sort all the jobs starting in this \epsb{} in non-decreasing order of size. Now the machine operates continuously till the last job assigned to start in the \epsb{} is finished. 
	\end{outline}
	
	Let $\sch'$ be the schedule of machine $i$ obtained after all the above operations. After Step \ref{eps_idle_time}, $\sch$ is converted to schedule $\tsch$. Consider a \nonei{}, $O$, of machine $i$ in $\tsch$. There are two cases to deal with. First case is when there is idle time (added in Step 1) of at least $\eps{}$ in $O$. Then the set of jobs, complete and fractional, in $O$ used to belong to a \onei{} in $\sch$. Thus since this set of jobs in $O$ satisfied the \bcon{} in $\sch$, they do not violate the \nbcon{} in $\tsch$. Second case is when idle time (added in Step 1) in $O$ is of length strictly less than $\eps$. This happens because the unique job in $O$, complete or fractional, is of size at least $1$. Thus this job (complete or fractional) used to belong to a \onei{} in $\sch$ and thus $O$ satisfies \nbcon{} in $\sch'$.   Adding idle time of length $2$ to the schedule maintains the \nbcon.  
	
	Thus, steps \ref{ieps_idle_time} and \ref{ieps_idle_time2} maintains the \nbcon.
		Step \ref{med_lar_rem} maintains the \nbcon\ since every \nonei\ containing small or tiny jobs in the resulting schedule, contains jobs that were in a common \nonei\ prior to this step as there is idle time of length at least $2$ between such subsets of small or tiny jobs, and furthermore every other \nonei\ is either empty or contains at most two jobs (that are medium or large jobs and thus of length larger than $2$) so \nbcon\ is satisfied using $B\geq 2$. 
Step \ref{rearr} maintains \nbcon\ because we can permute the jobs starting in every \epsb{} in an arbitrary way as long as the set of jobs starting in this block is not changed and the \nbcon{} will continue to hold. 
	
	Thus in the schedule $\sch'$ of machine $i$ resulting at the end of step  \ref{rearr}, the following holds.
	\begin{outline}
	\1 The jobs in the schedule appear in the order of tiny or small jobs followed by medium jobs and then large jobs.
	\1 In every \epsb{} of the schedule the idle time is at the beginning of the \epsb{} or at the completion time of the job that is being processed at the start of the \epsb{} and the jobs that start in the \epsb{} continue their processing continuously, in non-decreasing order of processing time.
	\1 There is a unit length  idle time between processing a pair of jobs if there is a difference of at least $1/\eps$ time between their starting times.
	\1 The schedule satisfies the \nbcon{}.
	\end{outline}
	
	The load of every machine $i$ is at most $(1+17\eps) L_i$ and thus the resulting makespan is at most $1+17\eps$ times the makespan of $\sch$.
\end{proof}

Let $\cmm$ be the makespan of the best nice solution (of $I'$) rounded up to the next integer power of $1+\eps$.  Since we have guessed $\cm$ already, we can also guess $\cmm$ in a constant number of iterations based on the last lemma.  In what follows, we assume that the algorithm knows the value of $\cmm$ and shows that if it uses a correct upper bound on the makespan of the best nice solution then it results in a feasible schedule of makespan at most $(1+\eps)^c \cmm$ for some constant value of $c$.

\section{The mixed-integer linear program\label{sec:MILP}}
	The Mixed Integer Linear program (MILP) we exhibit below uses \textit{containers} and \textit{configurations} defined for our problem.	
	A solution of the MILP defines a choice of configurations and containers (among other information regarding the schedule). A configuration or container can be chosen multiple times and the MILP ensures that the containers satisfy the \nbcon{} (consequently the \bcon{}). 
	
	\paragraph{Containers.}
	A container $t$ stores information for a set of consecutive \epsb s, denoted by $K_t$ such that $|K_t|\leq\ieps^2$. Each $k\in K_t$ corresponds to an \epsb\ and each component is associated with a sub-vector of length $|\sm| + 4$. 	
	A sub-vector associated to the \epsb{} $k\in K_t$ contains the following components. The first $|\sm|$ components store the number of small jobs of size $l$, for every $l\in \sm$, assigned to start in the \epsb{} $k$, and it is denoted by $S_{lkt}$. The next component, $T_{kt}$, stores the lower bound on the total size of tiny jobs rounded down to an integer multiple of $\eps^2$ in the sense that  $T_{kt} \eps^2$ is the lower bound and  $T_{kt}$ is a non-negative integer. The next component, $T'_{kt}$, is a binary indicator and is $1$ if there is at least one tiny job that starts in this \epsb{}.  The next component, $D_{kt}$, stores the length of the idle time in the \epsb{} rounded down to an integer multiple of $\eps^2$ in the sense that  $D_{kt} \eps^2$ is the lower bound on the length of this interval of idle time and  $D_{kt}$ is a non-negative integer. The last component is a binary indicator denoted by $P_{kt}$. $P_{kt}$ is $1$ when the last job assigned to \epsb{} at most $k-1$ continues its processing at the beginning of \epsb{} $k$, else it is $0$.

The \textit{load} of a container $t$, $L_t$, is the total size of the jobs and idle time assigned to the container and an additional idle time of one unit at the end of the container i.e.
	\begin{align*}
		L_t = \sum_{k\in K_t}\left(D_{kt}\eps^2 + T_{kt}\eps^2 + \sum_{l\in\sm}l \cdot S_{lkt}\right) + 1.
	\end{align*}

Since each non-binary component in a sub-vector corresponding to an \epsb\ is a non-negative integer of at most $1/\eps$, there are at most $1/\eps+1$ possible values for such component and we conclude the following lemma.
	\begin{lemma}
		The number of possible containers denoted as $\cont$ is at most $\left((\ieps + 1)^{|\sm|+2}\cdot 4\right)^{\ieps^2}$.
	\end{lemma}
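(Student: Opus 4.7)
The lemma bounds the cardinality of the set of possible container descriptions, so the natural approach is a direct counting argument: first bound the number of possibilities for each entry of the sub-vector describing a single $\eps$-block, then multiply to bound the number of sub-vectors per $\eps$-block, and finally raise to the maximum number of $\eps$-blocks in a container. Recall that a sub-vector for an $\eps$-block has $|\sm|+4$ entries, of which $|\sm|+2$ are non-negative integers (the $S_{lkt}$ for $l\in\sm$ plus $T_{kt}$ and $D_{kt}$) and two are binary ($T'_{kt}$ and $P_{kt}$).

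The plan is to verify that each non-binary entry lies in $\{0,1,\ldots,1/\eps\}$. Since $T_{kt}\eps^2$ is a rounded-down lower bound on total tiny-job processing time inside the $\eps$-block, which has length $\eps$, one obtains $T_{kt}\eps^2\le\eps$, i.e.\ $T_{kt}\le 1/\eps$; the identical argument applied to idle time yields $D_{kt}\le 1/\eps$. For $S_{lkt}$ I would use that small jobs have size strictly greater than $\eps^2$: if $k$ such jobs start inside an interval of length $\eps$ on a single machine, their starting times must be pairwise separated by more than $\eps^2$, forcing $(k-1)\eps^2<\eps$ and hence $k\le 1/\eps$ (using that $1/\eps$ is an integer). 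Each of the $|\sm|+2$ non-binary entries then admits at most $1/\eps+1$ possible values, and each of the two binary entries admits exactly $2$ values, so a single sub-vector admits at most $(\ieps+1)^{|\sm|+2}\cdot 4$ distinct values.

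To finish, since the sub-vectors for distinct $k\in K_t$ are specified independently and $|K_t|\le\ieps^2$, a container is described by a sequence of at most $\ieps^2$ such sub-vectors, yielding the claimed upper bound $\bigl((\ieps+1)^{|\sm|+2}\cdot 4\bigr)^{\ieps^2}$ on the number of distinct containers. The only non-trivial step is the bound $S_{lkt}\le 1/\eps$, where one must invoke the minimum size $\eps^2$ of a small job and the non-overlap of jobs on a machine; the bounds on $T_{kt}$ and $D_{kt}$ are immediate from the definitions, and the rest of the argument is routine multiplication.
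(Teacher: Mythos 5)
Your argument is correct and coincides with the paper's: the paper, too, simply counts $|\sm|+2$ non-binary components per sub-vector (each taking one of $1/\eps+1$ values) and two binary components (four combinations), then raises to the power $|K_t|\le\ieps^2$. The paper merely asserts the bound ``each non-binary component is a non-negative integer of at most $1/\eps$'' without deriving it, whereas you supply the short derivation for $S_{lkt}$, $T_{kt}$, $D_{kt}$; the route and the resulting bound are the same.
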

		
	\paragraph{Short and long containers.} Containers are further classified based on the load of the container as below. A container is \textit{short} if the load of the container is at most $\eps\cmm$, else it is \textit{long}. The load of a long container is at most $\cmm+1$.  
	
	\paragraph{Feasibility of a container.} Create a partial schedule from a container $t$. In an increasing order of $k\in K_t$ do the following. Add $D_{kt} \eps^2$ length of idle time at the beginning of the \epsb{} if $P_{kt}=0$ or the finishing time of the job being processed at the beginning of the \epsb{} if $P_{kt}=1$. Then add a virtual job of size $T'_{kt}\cdot T_{kt}\cdot \eps^2$, to account for the tiny jobs. Then add the small jobs in sequence one after the other in non-decreasing order of the size of the job. Container $t$ is feasible if at most one job continues its processing out of every \epsb{} in the partial schedule generated from the container. Let $\conts$ be the set of feasible containers. Let $\conl$ be the set of feasible long containers and $\cons$ be the set of feasible short containers. The cardinalities of these sets, namely $|\conts|$, $|\conl|$, and $|\cons|$, are upper bounded by $\cont$.

	\paragraph{Rounding load of containers.} The load of each container $t$ is rounded down to the next integer power of $1+\eps$ and the resulting rounded load is denoted by $L'_t$, i.e.,
$L'_t = \oeps^{\left\lfloor\log_{\oeps} L_t\right\rfloor}$.
	The number of distinct values of (rounded) load of long containers is at most $\log_{\oeps}\left(\frac{2}{\eps}\right) + 2$. Let $\conls$ be the set of distinct rounded loads of long containers and $|\conls |$ is upper bounded by a function of $\eps$.
	
	\paragraph{Configurations.}
	A \textit{configuration} $c$ is a vector of length $|\conls|+|\la|+4$. For $l\in\conls$, we have a component  $\alpha_{cl}$ that stores the number of long containers of rounded load $l$ assigned to that configuration. The next component $\beta_c$ stores the lower bound on the total load of short containers assigned to the configuration rounded down to an integer multiple of $\eps^2\cmm$ in the sense that  $\beta_c \eps^2 \cmm$ is the lower bound and  $\beta_c$ is a non-negative integer.  The next component is a binary indicator $\beta'_c$ that  is $1$ if at least one short container is assigned to $c$. The next $|\la|$ components store the number of large jobs of each size  assigned to a machine with this configuration, for $\mathtt{l}\in \la$ the corresponding component is denoted as $\gamma_{c\mathtt{l}}$. The next component $\delta_c$ stores the lower bound on the total size of the medium jobs rounded down to an integer multiple of $\eps^2\cmm$ in the sense that  $\delta_c \eps^2 \cmm$ is the lower bound and  $\delta_c$ is a non-negative integer.  The last component $\delta'_c$ is a binary indicator and is $1$ if at least one medium job is assigned to $c$.
	
	The \textit{load} of a configuration $c$ is denoted as $L_c$ and it is defined as
	\begin{align*}
	L_c = \sum_{l\in \conls}l\alpha_{cl} + \beta_c\eps^2\cmm + \sum_{\mathtt{l}\in\la}\mathtt{l}\gamma_{c\mathtt{l}} + \delta_c\eps^2\cmm.
	\end{align*}	
	
	\paragraph{Feasibility of a configuration.}
	A configuration is feasible if the containers assigned to the configuration are feasible and the load of the configuration is at most $\max\{ \oeps\cmm, \cmm+1\}$. In addition to that, if $\cmm\leq 1/\eps$ then the configuration will consist of exactly one long container of load at most $\cmm+1$ (and no other containers, medium jobs, or large jobs). Let $\confs$ be the set of feasible configurations.

The load of the configuration is upper bounded by $\cmm+1$ since the load of the containers is increased by unit idle time, so the load of a unique container could be larger than the last completion time of a job assigned to a machine with this container.  The next lemma follows by upper bounding the number of different values each component on a configuration can take.
	
	\begin{lemma}	
		The number of possible feasible configurations  is at most $(2/\eps)^{{|\conls|} + |\la|+4}\cdot 4$ and we denote this constant (when $\eps$ is fixed) as $\conf$.
	\end{lemma}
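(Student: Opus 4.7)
My plan is a direct counting argument on the $(|\conls|+|\la|+4)$-dimensional integer vector that encodes a configuration. Since imposing feasibility can only reduce the count, it suffices to upper bound, for each coordinate, the number of values it may attain under the load constraint implied by feasibility, and then take the product. The single quantitative input I would use throughout is the uniform load bound $\max\{(1+\eps)\cmm,\cmm+1\}\le 2\cmm$ on any feasible configuration, which follows from the assumption $\cmm\ge 1$ fixed after the guessing step.

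First I would handle the $\alpha$ and $\gamma$ coordinates. For each $l\in\conls$ the coordinate $\alpha_{cl}$ counts long containers of rounded load $l$; since every long container has load strictly greater than $\eps\cmm$ and the configuration load is at most $2\cmm$, we have $\alpha_{cl}\le 2/\eps$. Analogously, every large job has size at least $\theta\ge \eps\cmm$, so for each $\mathtt{l}\in\la$ the coordinate $\gamma_{c\mathtt{l}}$ is also at most $2/\eps$. Together these account for $(2/\eps)^{|\conls|+|\la|}$ possibilities.

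Next I would handle $\beta_c$ and $\delta_c$. Their defining inequalities give $\beta_c\eps^2\cmm\le 2\cmm$ and $\delta_c\eps^2\cmm\le 2\cmm$, hence $\beta_c,\delta_c\le 2/\eps^2$. To package these into the common base-$(2/\eps)$ bound stated in the lemma, I would use the slack estimate $2/\eps^2\le (2/\eps)^2$, which is valid since $\eps<1$ (in fact $1/\eps$ is an integer larger than $3$). Hence each of these two coordinates contributes at most $(2/\eps)^2$ values, i.e.\ four additional factors of $2/\eps$ in the exponent. The binary indicators $\beta'_c$ and $\delta'_c$ each contribute a factor of $2$, together giving the leading $4$. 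Multiplying everything yields $(2/\eps)^{|\conls|+|\la|+4}\cdot 4$, as claimed.

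No significant obstacle is anticipated; the argument is entirely a product-of-ranges bound. The only slightly non-obvious step is the repackaging of the $\beta_c,\delta_c$ bounds via $2/\eps^2\le (2/\eps)^2$, which is what produces the $+4$ (rather than $+2$) in the exponent.
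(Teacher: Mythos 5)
Your argument is exactly the paper's intended one; the paper itself devotes only a single sentence to it (``upper bounding the number of different values each component on a configuration can take''), and you are simply supplying the arithmetic behind that sentence, including the useful observation that the $\beta_c,\delta_c\le 2/\eps^2$ bounds get repackaged as $(2/\eps)^2$ each to produce the $+4$ in the exponent.

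One small technical point worth being aware of: to bound $\alpha_{cl}$ you write ``every long container has load strictly greater than $\eps\cmm$,'' which is true of the \emph{actual} load $L_t$, but the index $l\in\conls$ in the configuration-load formula $L_c=\sum_{l}l\alpha_{cl}+\dots$ is the \emph{rounded-down} load $L'_t=(1+\eps)^{\lfloor\log_{1+\eps}L_t\rfloor}$, which only satisfies $l>\eps\cmm/(1+\eps)$. This weakens the bound to $\alpha_{cl}<2(1+\eps)/\eps$ rather than $\alpha_{cl}<2/\eps$, so the per-coordinate count for the $\alpha$'s is really $2/\eps+2$ rather than $2/\eps$. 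The paper is evidently also glossing over this (its stated constant is not exactly tight under either reading), and it affects nothing downstream --- $\conf$ remains a function of $\eps$ alone --- but if you wanted the arithmetic airtight you would either carry the extra $(1+\eps)$ factor through or observe that the overcount on the $\alpha$'s is absorbed by the slack you already noted in the $\beta_c,\delta_c$ step.
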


We summarize the last defined constant terms in Table \ref{table1}.
	
	\begin{table}[htb]
	\begin{center}
	\begin{tabular}{|c || p{25mm} || c |}
		\hline
		Notation & Description & Upper bound as a function of $\eps$\\
		\hline
		&the number & \\
		$\cont$ & of possible & $\left((\ieps + 1)^{|\sm|+2}\cdot 4\right)^{\ieps^2}$ \\ & containers  & a single-exponential function of $1/\eps$\\
		\hline
		&the number&\\
		$|\conls|$ &  of possible & $\log_{\oeps}\left(\frac{2}{\eps}\right) + 2$ \\
		& loads of large containers& a polynomial in $1/\eps$ \\
		\hline
		&the number&\\
		$\conf$ &  of possible & $ (2/\eps )^{{|\conls|} + |\la|+4}\cdot 4$ \\  & configurations  & a single-exponential function of $1/\eps$ \\
		\hline
	\end{tabular}
	\caption{Upper bounds as functions of $\eps$\label{table1}}
\end{center}
	\end{table}
	
\paragraph{Formulation of the MILP.}  We are ready to formulate the MILP.  First, we explain the set of decision variables, then we present the linear constraints.  Our MILP is a feasibility mathematical program, that is, there is no objective function, but we are only interested in finding a feasible solution (if there is such) or report that there is no feasible solution if such solution does not exist.  

\paragraph{Decision variables.}
	We have two families of decision variables. Assignment variables to assign jobs/containers to containers/configurations, and counters to count the number of times a container/configuration is chosen.  There are three types of assignment variables and two types of counters.

	\paragraph{Assignment of containers to configurations.}
	The variable $z_{ct}$ indicates the number of times a container $t$ is assigned to configuration $c$. The total number of assignment variables of containers to configurations is 
$\cont\cdot\conf$,	which is a constant when $\eps$ is fixed. They are forced to be integral.
	
	\paragraph{Assignment of tiny jobs to containers.}
	The variable $y_{jkt}$ is $1$ when tiny job $j$ is assigned to start in the $k^{th}$ \epsb{} in container $t$. The total number of assignment variables of tiny jobs to containers is at most 
$n\cdot \ieps^2 \cdot \cont$,	which is upper bounded by a polynomial in the input encoding length. These variables are allowed to be fractional.
	
	\paragraph{Assignment of medium jobs to configurations.}
	The variable $x_{jc}$ is $1$ when medium job $j$ is assigned to configuration $c$. The total number of assignment variables of medium jobs to configurations is at most 
$n\cdot\conf$,	which is upper bounded by a polynomial in the input encoding length. These variables are allowed to be fractional.

	\paragraph{Container counters.}
	The variable $w_t$ represents the number of times the container $t$ is chosen. The number of container counters is
$\cont$ which is a constant when $\eps$ is fixed. These variables are forced to be integral.
	
	\paragraph{Configuration counters.}
	The variable $v_c$ represents the number of times the configuration $c\in C$ is chosen. The number of configuration counters is  
$\conf$	 which is a constant when $\eps$ is fixed. These variables are forced to be integral.
	
We conclude that the MILP will have $ \cont\cdot \conf + \cont + \conf$ integer decision variables and at most $n \cdot (\ieps^2 \cdot \cont + \conf)$ fractional decision variables.	
	
	\paragraph{The constraints of the MILP.}  Next, we present the constraints of the MILP together with their intuitive explanation.  Note that this intuition is described only to assist us in formulating the MILP, but the MILP's relevance and correctness are not based on this intuition and they will be derived below.
	
 Assigning tiny jobs to containers
		\begin{align}
			\sum_{t\in \conts}\sum_{k\in K_t}y_{jkt} &= 1,\forall j\in\tj\label{every_tiny_ass}\\
			\sum_{j\in \tj}p_jy_{jkt}&\leq w_tT'_{kt}(T_{kt}+1)\eps^2,\ \forall k\in K_t, \forall t\in \conts\label{tot_tiny}
		\end{align}
	
		There is enough positions for all the small jobs of each size in containers
		\begin{align}
			\sum_{t\in \conts}\sum_{k\in K_t}S_{lkt}w_t = |\sj_l|, \forall l\in\sm\label{small_nos}
		\end{align}
	
		Assigning medium jobs to configurations
		\begin{align}
			\sum_{c\in\confs} x_{jc} &= 1, \forall j\in\mj\label{every_med_ass}\\
			\sum_{j\in \mj}x_{jc}p_j &\leq v_c\delta'_c(\delta_c+1)\eps^2\cmm,\ \forall c\in\confs\label{tot_med}
		\end{align}
	
		There is enough positions for all the large jobs of each size in configurations
		\begin{align}
			\sum_{c\in\confs} \gamma_{cl}v_c = |\lj_l|,\ \forall l\in\la\label{lar_nos}
		\end{align}
	
		The number of times a container is chosen is equal to the sum of the number of times that container is assigned to all the configurations. 
		\begin{align}
			\sum_{c\in\confs}z_{ct} &= w_t,\ \forall t\in\conts\label{cont_nos}
		\end{align}

	 If a configuration is not chosen, then there are no containers assigned to that configuration. Otherwise, any number of containers (and without loss of generality this is at most $n$) can be assigned to copies of that configuration
		\begin{align}
			z_{ct} &\leq n v_c,\ \forall t\in\conts, c\in\confs \label{conf_nos}
		\end{align}
	
	 Assigning short containers to configurations
		\begin{align}
			\sum_{t\in\cons} L_tz_{ct}\leq v_c\beta'_c(\beta_c+1)\eps^2\cmm,\ \forall c\in\confs \label{tot_cons}
		\end{align}
	
	 There is enough positions for all the long containers of each rounded load in the configurations
		\begin{align}
			\sum_{t:L'_t=l}w_t &= \sum_{c\in\confs} v_c\alpha_{cl},\ \forall l\in\conls\label{conl_nos}
		\end{align}
	
	We enforce \nbcon{} by enforcing the following average constraints for containers
		\begin{align}
			\sum_{k=i}^{i+\ieps} \left(w_t\sum_{l\in\sm}S_{lkt} + \sum_{j\in\tj}y_{jkt}\right)+ w_tP_{it}\leq w_t \cdot B,\ \ \ \  i=1,2,\ldots,(|K_t|-(\ieps+1)), \forall t\in \conts\label{nbcon}
		\end{align}
	
	 Number of configurations chosen equals the number of machines
		\begin{align}
			\sum_{c\in\confs}v_c = |M|\label{conf_mach}
		\end{align}
	
	 Non-negativity constraints and integrality constraints
		\begin{align}
			0 \leq y_{jkt}&\leq 1,\ \forall j\in\tj,k\in K_t,t\in\conts\\
			0 \leq x_{jc}&\leq 1,\ \forall j\in\mj,c\in\confs\\
			w_t&\in\mathbb{Z}_+,\ \forall t\in\conts\\
			v_c&\in\mathbb{Z}_+,\ \forall c\in\confs\\
			z_{ct}&\in\mathbb{Z}_+,\ \forall c\in\confs,\forall t\in\conts
		\end{align}

	A MILP solution will give $(\textbf{v},\textbf{w},\textbf{x},\textbf{y},\textbf{z})$.  We conclude this section by proving that if there is a feasible  nice schedule (for the current guessed value of the approximated makespan of the best nice solution), then the MILP has a feasible solution.

	\begin{theorem}\label{mil_obj_val}
		If there exists a feasible nice schedule to \bcs\ with makespan at most $\cmm$, then the MILP has a feasible solution.
	\end{theorem}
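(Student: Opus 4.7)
The plan is to directly transcribe any nice schedule $\sch$ of makespan at most $\cmm$ into a feasible assignment to the MILP variables. For each machine $i$, I first partition the tiny/small prefix of $\sch_i$ at integer multiples of $1/\eps$ into consecutive pieces of length at most $1/\eps$; by nice conditions~1--3 each such piece consists of at most $\ieps^2$ consecutive \epsb s and is followed by a unit idle time required at the $1/\eps$-multiple. From each piece I build one container $t$ by reading off its components: for every \epsb{} $k$ in the piece set $S_{lkt}$ to be the number of small jobs of size $l$ starting in $k$, $T_{kt}=\lfloor(\text{total tiny size starting in }k)/\eps^2\rfloor$ with $T'_{kt}$ the tiny-presence indicator, $D_{kt}=\lfloor(\text{idle time in }k)/\eps^2\rfloor$, and $P_{kt}=1$ iff a job from an earlier \epsb{} of the same piece is still running at the beginning of $k$. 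Nice condition~2 ensures the idle time lies where the feasibility definition of a container requires, and at most one job leaves any internal \epsb{} of the piece, so each extracted container is feasible.

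I then collapse the containers of machine $i$ into a single configuration $c(i)$ by taking $\alpha_{c(i),l}$ to count the long containers of rounded load $l$ on $i$, $\beta_{c(i)}=\lfloor(\text{total short-container load on }i)/(\eps^2\cmm)\rfloor$ with indicator $\beta'_{c(i)}$, $\gamma_{c(i),\mathtt{l}}$ the number of large jobs of size $\mathtt{l}$ on $i$, and $\delta_{c(i)}=\lfloor(\text{total medium size on }i)/(\eps^2\cmm)\rfloor$ with indicator $\delta'_{c(i)}$. All terms of $L_{c(i)}$ are rounded-down lower bounds, while the ``$+1$'' idle inside every container's load corresponds to a unit idle that is genuinely present in $\sch_i$ at a $1/\eps$-multiple (nice condition~3); hence $L_{c(i)}\leq\cmm\leq\max\{\oeps\cmm,\cmm+1\}$ and the configuration is feasible. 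To finish the assignment I set $v_c=|\{i:c(i)=c\}|$, let $w_t$ count the total number of $t$-copies across all machines, let $z_{ct}$ count the $t$-copies on machines with configuration $c$, set $y_{jkt}=1$ iff the tiny job $j$ starts in the $k$-th \epsb{} of a container of type $t$ (and $0$ otherwise), and set $x_{jc}=1$ iff the medium job $j$ lies on a machine with configuration $c$.

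Verification is mostly direct: (\ref{every_tiny_ass}), (\ref{every_med_ass}), (\ref{conf_mach}) are immediate; (\ref{small_nos}), (\ref{lar_nos}), (\ref{conl_nos}), (\ref{cont_nos}) aggregate exact counts and (\ref{conf_nos}) holds because each machine carries only $O(n)$ containers; integrality and non-negativity hold by construction. The three capacity constraints (\ref{tot_tiny}), (\ref{tot_med}), (\ref{tot_cons}) are the only ones that could plausibly fail; they hold because all $w_t$ copies of a given container (respectively all $v_c$ machines with a given configuration) share the same rounded-down values $T_{kt}$ (resp.\ $\delta_c,\beta_c$), so each copy's actual content exceeds the lower bound by strictly less than $\eps^2$ (resp.\ $\eps^2\cmm$), and this slack is absorbed by the ``$+1$'' on the right-hand side, with the indicators $T'_{kt},\delta'_c,\beta'_c$ taking care of the trivial case. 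I expect the main conceptual step to be (\ref{nbcon}): its LHS aggregates small starts, continuations, and tiny starts over all $w_t$ copies of container $t$ restricted to a \nonei{} (spanning \epsb s $i,\ldots,i+\ieps$), while the RHS is $w_tB$; because every individual copy satisfies the per-machine \nbcon{} of the nice schedule, summing across copies preserves the bound, so the averaged constraint of the MILP is satisfied.
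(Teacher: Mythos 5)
Your overall template — build containers and configurations from a nice schedule, then read off the MILP variables and check each constraint — matches the paper's proof, and your handling of the counting, capacity, and \eqref{nbcon} constraints is essentially right. However, there is a genuine gap in the container construction, and it infects both the feasibility of the extracted containers and your load accounting.

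You partition the tiny/small prefix of $\sch_i$ ``at integer multiples of $1/\eps$,'' and claim that nice condition~3 supplies a unit idle time at each $1/\eps$-multiple. That is not what condition~3 says: it guarantees a unit idle time \emph{somewhere between} two jobs whose starting times differ by at least $1/\eps$; it places no idle time at any fixed point on the time axis. Indeed, a small job can start at, say, $0.6/\eps$ with processing time $0.5/\eps$ and run across $1/\eps$, and the immediately following job may start at $1.1/\eps$ with no idle in between (their starts are only $0.5/\eps$ apart, so condition~3 is silent). Under your partitioning this job is split across two ``pieces,'' so the piece is not a valid partial schedule: the second piece begins with a job that started outside it, its $P$-component would refer to a crossing from a \emph{different} container, its $D$-components would have to account for time occupied by a foreign job, and the container-feasibility check (``reconstruct a partial schedule from the $D,T,S,P$ data and verify at most one job leaves each \epsb'') is no longer even well posed. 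The paper instead cuts \emph{at the unit idle times themselves} (two jobs land in the same container iff there is no unit idle between their starting times). Condition~3 then guarantees that all jobs in one such container start within a window of length $<1/\eps$, so at most $\ieps^2$ \epsb s contain starting jobs (the tail is truncated), no job crosses a container boundary, each container really does end with a genuine unit idle that accounts for its ``$+1$,'' and the first \epsb{} of a container always has $P=0$.

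A secondary consequence of the same misreading: your justification that $L_{c(i)}\le\cmm$ rests on the claim that every container's ``$+1$'' is idle time genuinely present at a $1/\eps$-multiple, which again is not supported by condition~3. The paper does not assert $L_{c(i)}\le\cmm$; rather, the feasibility notion for configurations is deliberately relaxed to $L_c\le\max\{(1+\eps)\cmm,\cmm+1\}$ precisely to absorb the potentially spurious ``$+1$'' of the final container, and the argument there is that all component values are rounded down. You should switch to the unit-idle partitioning and use that relaxed bound; then the rest of your constraint verification goes through as you wrote it.
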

	\begin{proof}
		Assume that there exists a nearly optimal nice schedule \opt{} for the rounded instance $I'$. Let the makespan of \opt{} be $\copt \leq \cmm$. Let the schedule for machine $i$ in this solution be $\opt_i$.		
		We generate a feasible MILP solution from \opt{} along with a set of feasible configurations.  For all machine $i$, we first construct a configuration $C_i$ corresponding to this machine from $\opt_i$ as below.  
		
\paragraph{Defining containers.}	 Consider the prefix of the schedule of this machine consisting only of the time period where all the tiny and small jobs assigned to that machine are scheduled (together with its idle time in between those jobs). Partition the job set assigned to this prefix into containers. Two jobs are assigned to the same container only if there is no unit length idle time between their starting times.  By definition, $\opt_i $ has a unit idle time after every container and we add it to the container.  If the makespan is at most $1/\eps$, the entire schedule of $i$ is one container and we do not partition it further.  Since \opt\ is nice, the maximum load of a container is at most $2/\eps+1$.   Let $T_i$ be the set of containers generated from $\opt_i$.
			
\paragraph{partitioning each container into \epsb s.} Partition each container $t\in T_i$ into \epsb s and ignore the \epsb s that belong to the \epsb s of the unit idle time at the end of the container and the \epsb s beyond the $1/\eps^2$ blocks within the container (if there are such). 
For each \epsb{} $k$ of $t$ we define the following components of the sub-vector of $t$ corresponding to $k$.  First, let $D_{kt}$ be so that $\eps^2 D_{kt}$ is the idle time in $k$ rounded down to the next integer multiple of $\eps^2$. Second, let $	T_{kt} = \left\lfloor\frac{\sum p_j}{\eps^2}\right\rfloor$, 
				where the summation is over the set of tiny jobs that start in $k$ so $T_{kt} \cdot \eps^2$ is  the total size of tiny jobs assigned to $k$ of $t$ rounded down to the next integer multiple of $\eps^2$. If there are no tiny jobs assigned to start in $k$ then let $T'_{kt}=0$ else $T'_{kt}=1$.				
					Third, for each distinct size of small jobs $l\in\sm$ let $S_{lkt}$ be the number of small jobs of size $l$ that start in $k$ of $t$. 
Last, set $P_{kt}$ to be $1$ if a job starts in an \epsb{} at most $k-1$ and continues its processing also in the \epsb{} $k$, else it is $0$.  Based on these sub-vectors, we define the container $t$, and using this vector we define the load of the container.
			
\paragraph{Defining the configuration $C_i$ corresponding to machine $i$.}  For each $l$, we count the number of long containers with load rounded to integer powers of $1+\eps$, $l$ assigned to $i$. Denote it by $\alpha_{il}$. We compute the total load of all short containers assigned to $i$ and we round this value down to the next integer multiple of $\eps^2\cmm$. The resulting rounded value is  $\beta_i \eps^2 \cmm$. If there are no short containers assigned to $i$ then let $\beta'_i = 0$, else $\beta'_i =1$.			
			 For each $l\in\la$, count the number of large jobs of size $l$ assigned to $i$ and denote it by $\gamma_{il}$.			
			 Calculate the total size of medium jobs assigned to $i$ and round down the resulting value to the next integer multiple of $\eps^2\cmm$. Let $\delta_i$ be such that this outcome is $\delta_i \eps^2 \cmm$. If there are no medium jobs assigned to $i$ then let $\delta'_i=0$ else $\delta'_i=1$.  This defines all components of a configuration that we denote by $C_i$.

\paragraph{Feasibility of containers and configurations.} Let $T$ be the multiset of all containers, and let $C$ be the multiset of all configurations we have defined (over all machines).	 Next we check the feasibility of the containers generated. Create partial schedules from each container. Since a container was created by rounding down the total idle time and the total size of tiny jobs in \epsb s with respect to  \opt{}, the number of jobs that continue their processing out of the corresponding \epsb{} does not increase (compared to \opt{}). By the facts that at most one job continued its processing out of every \epsb{} in \opt{}, and  \opt{} is a nice solution, we conclude that the partial schedule generated from every container will also have at most one job whose processing continues out of every \epsb{}. Thus, the containers generated are feasible. Consequently we get that the configurations in $C$ are also feasible, since the containers are feasible and the load of each configuration is not increased while the configuration was generated from the schedule (with the rounding down involved) and if $\cmm\leq 1/\eps$ then the configuration consists of one long container and no other jobs or containers. 
		
\paragraph{Defining the MILP solution.}		Two configurations in $C$ are identical if every parameter of the two configurations are identical. Let $v_c$ for every feasible configuration $c$ be the number of identical copies of configuration $c$ in the multiset $C$ (perhaps $0$ if $c\notin C$). Thus now we have $\textbf{v}^{\opt{}}$. Two containers are identical if every parameter of the two containers are identical. Let $w_t$ for all feasible containers $t$ be the number of identical copies of container $t$ in the multiset $T$. Thus now we have $\textbf{w}^{\opt{}}$.  Identify the configurations to which each container is assigned to and obtain $\textbf{z}^{\opt}$. Identify the \epsb{} and container to which every tiny job is assigned to start to obtain $\textbf{y}^{\opt}$. Identify the configurations to which each medium job is assigned to to obtain $\textbf{x}^{\opt}$. 
		
\paragraph{Proving that the MILP solution is feasible.}	In order to prove the theorem it suffices to show that the generated MILP solution $(\textbf{v}^\opt,\textbf{w}^\opt,\textbf{x}^\opt,\textbf{y}^\opt,\textbf{z}^\opt)$ is feasible. 	
		From the definition of the MILP solution $(\textbf{v}^\opt,\textbf{w}^\opt,\textbf{x}^\opt,\textbf{y}^\opt,\textbf{z}^\opt)$, all of them are non-negative integers and $\textbf{x}^\opt$ and $\textbf{y}^\opt$ are binary vectors. 
		Since each  tiny job is assigned to start only in one \epsb{} of only one container of one machine, and similarly since each medium job is scheduled only to one machine, constraints  \eqref{every_tiny_ass} and \eqref{every_med_ass}  are satisfied.	 Consider a container $t\in T$ with only one copy of $t$ and an \epsb{} $k$ of the container. Thus we have
		\begin{align*}
			T_{kt} &= \left\lfloor\sum_{j\in\tj}\frac{p_jy_{jkt}}{\eps^2}\right\rfloor\\
			(T_{kt}+1)\eps^2 &\geq \sum_{j\in\tj}p_jy_{jkt}\\
		\intertext{Considering all copies of $t$ for an arbitrary container we get}
			w_t (T_{kt}+1)\eps^2 &\geq \sum_{j\in\tj}p_jy_{jkt}.
		\end{align*}
		Thus constraint \eqref{tot_tiny} is satisfied as it is trivially satisfied if no tiny jobs are assigned to \epsb\ $k$ of $t$.	Constraint \eqref{small_nos} is satisfied trivially from the definition of the variables. Consider a configuration $c\in C$ with only one copy of $c$. Thus we have
		\begin{align*}
			\delta_c &= \left\lfloor\sum_{j\in\mj}\frac{x_{jc}p_j}{\eps^2\cmm}\right\rfloor\\
			(\delta_c+1)\eps^2\cmm &\geq\sum_{j\in\mj}x_{jc}p_j\\
		\intertext{Considering all copies of $c$ for an arbitrary configuration we get}
			v_c (\delta_c+1)\eps^2\cmm &\geq\sum_{j\in\mj}x_{jc}p_j.
		\end{align*}
		Thus constraint \eqref{tot_med} is satisfied as it is trivially satisfied if no medium jobs are assigned to the configuration. If a container was generated from the schedule of a machine, then that container will be placed in only one configuration (of the same machine). Similarly, a large job will also be placed in only one configuration. Thus constraints \eqref{lar_nos},  \eqref{cont_nos},   \eqref{conf_nos}, and \eqref{conl_nos} are satisfied.  Recall that $\cons$ is the set of short containers. Consider a configuration $c$ with only one copy of $c$,  we have
		\begin{align*}
			\beta_c &= \left\lfloor\sum_{t\in\cons}\frac{z_{ct}L_c}{\eps^2\cmm}\right\rfloor\\
			(\beta_c+1)\eps^2\cmm &\geq \sum_{t\in\cons}z_{ct}L_c\\
		\intertext{Considering all the copies of $c$ for an arbitrary configuration we get}
			v_c(\beta_c+1)\eps^2\cmm &\geq \sum_{t\in\cons}z_{ct}L_c.
		\end{align*}
		Thus constraint \eqref{tot_cons} is satisfied as it is trivially satisfied if no short container is assigned to $c$. Constraint \eqref{nbcon} is satisfied since \opt{} was a nice schedule and satisfied the \nbcon. 
 Constraint \eqref{conf_mach} is satisfied trivially since a configuration was generated from the schedule of each machine.
		Hence, the solution generated is feasible as required.
	\end{proof}

\section{Rounding a MILP solution into a feasible schedule\label{sec:round-MILP}}
	Next we state the main claim regarding the performance guarantee of the Best-Fit schedule for scheduling with cardinality constraint \cite[Lemma~2]{LJ+16} which we use later in our algorithm (in their settings, $c_k\in\mathbb{Z}_+$ is the capacity bound on machine $k$ but here the use will be different).
	\begin{lemma}\label{best_fit}
		Assume that $c_k$ is a non-negative integer for all $k$. If there is a feasible solution for the following linear system 
		\begin{align*}
			\sum_{j=1}^{n} p_jy_{jk} \leq t_k&,\ 1\leq k\leq m\\
			\sum_{j=1}^{n} y_{jk} = c_k&,\ 1\leq k\leq m\\
			\sum_{k=1}^{m} y_{jk} = 1&,\ 1\leq j\leq n\\
			0\leq y_{jk} \leq 1&, \ 1\leq j\leq n,\ \ 1\leq k\leq m
		\end{align*}
	then an integer solution satisfying:
	\begin{align*}
		\sum_{j=1}^{n} p_jy_{jk} \leq t_k + p_{max}&,\ 1\leq k\leq m\\
		\sum_{j=1}^{n} y_{jk} = c_k&,\ 1\leq k\leq m\\
		\sum_{k=1}^{m} y_{jk} = 1&,\ 1\leq j\leq n\\
		y_{jk} \in \{0,1\}&, \ 1\leq j\leq n,\ \ 1\leq k\leq m
	\end{align*}
	could be obtained in $O(n\log n)$ time, where $p_{\max}=\max_{j}\{p_j\}$. 
	\end{lemma}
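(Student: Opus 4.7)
The plan is to adapt the classical Shmoys--Tardos rounding for bipartite assignment polytopes to this cardinality-constrained setting, with jobs processed in non-increasing order of size (hence the Best-Fit Decreasing flavor). I would begin by sorting the jobs so that $p_1 \ge p_2 \ge \cdots \ge p_n$, fixing an arbitrary feasible fractional solution $y^*$ to the hypothesized linear system. The key structural observation is that the transportation polytope cut out by the equality and box constraints alone (ignoring the load inequalities $\sum_j p_j y_{jk}\le t_k$) is integral, so any fractional solution satisfying the coverage and cardinality equalities can be rounded to an integer one with the same row and column sums.

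I would then build the Shmoys--Tardos ``slot graph''. For each machine $k$, create $c_k$ ordered slots, each of capacity $1$. Since $\sum_j y^*_{jk}=c_k$, the fractional mass of jobs on $k$ packs exactly into these slots by a sweep in non-increasing order of $p_j$: fill slot $1$ until it contains total mass $1$, then slot $2$, and so on; each job's mass on $k$ then spans at most two consecutive slots. Form the bipartite graph $G$ with jobs on one side and slots on the other, connecting $j$ to every slot in which it has positive mass. By construction this defines a fractional perfect matching of jobs to slots, and by integrality of the bipartite matching polytope an integer perfect matching exists. Placing each job on the machine of its matched slot produces an integer $y$ satisfying $\sum_j y_{jk}=c_k$ and $\sum_k y_{jk}=1$ exactly.

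The load bound is the main technical step and the expected obstacle. For a fixed machine $k$, the rounded job in slot $s$ has size at most the largest processing time of any job that contributed mass to that slot. Because the slot-filling sweep processes jobs in non-increasing size order, for $s\ge 2$ the ``largest'' job contributing to slot $s$ is also the last job to contribute mass to slot $s-1$, and therefore carries positive fractional mass (and hence fractional load) into that earlier slot. A telescoping charge then bounds the size of the slot-$s$ job by the fractional load of slot $s-1$; only the first slot lacks a predecessor to charge against, and the corresponding surplus is absorbed by a single $p_{\max}$ slack. Summing over all $c_k$ slots yields $\sum_j p_j y^{\mathrm{int}}_{jk}\le t_k+p_{\max}$, as required.

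The main obstacle is formalizing this telescoping charge precisely, especially in corner cases where many small jobs pack into one slot or a single large job overflows across several; handling this cleanly requires a careful separation of ``interior'' versus ``boundary'' contributions to each slot so that the charging never double-counts mass. The $O(n\log n)$ running time then follows immediately from the fact that the initial sort dominates: the slot graph has only $O(n)$ vertices, its perfect matching can be read off by a single linear pass over the sorted job list (since the support structure is an interval/staircase graph), and no further LP or matching computation is required.
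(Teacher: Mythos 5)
The paper does not prove this lemma itself; it is quoted from \cite[Lemma~2]{LJ+16}, and the framing there (``performance guarantee of the Best-Fit schedule'') indicates that the cited proof analyzes a Best-Fit Decreasing greedy directly. Your proposal takes a genuinely different route, namely Shmoys--Tardos slot-graph rounding, and as an \emph{existence} proof it is sound. Since $y^*_{jk}\le 1$ and each slot has capacity $1$, a job's fractional mass on a fixed machine spans at most two consecutive slots (so the corner case you flag, a job overflowing across several slots, cannot actually occur), the fractional solution yields a fractional perfect matching of jobs to slots, and integrality of the bipartite matching polytope gives an integer perfect matching meeting both equality families exactly. The telescoping load bound also goes through cleanly: for $s\ge 2$ the largest job touching slot $s$ of machine $k$ has size at most the fractional load $L_{s-1}$ of slot $s-1$, because slot $s-1$ carries total mass exactly $1$ from jobs that are all at least as large; summing $p_{\sigma(s)}\le L_{s-1}$ over $s\ge 2$ and charging slot $1$ to a single $p_{\max}$ gives $\sum_s p_{\sigma(s)}\le p_{\max}+\sum_s L_s\le p_{\max}+t_k$. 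This is the standard Shmoys--Tardos charge and buys a clean, modular existence argument compared with the greedy-specific analysis of the cited Best-Fit.

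The genuine gap is the $O(n\log n)$ running-time claim, which your write-up does not actually establish. You assert the perfect matching ``can be read off by a single linear pass over the sorted job list (since the support structure is an interval/staircase graph),'' but the slot graph has interval structure only \emph{within} one machine. A job that is fractionally split across several machines is adjacent to slots on each of them, and those slot indices are not contiguous under any single ordering, so the bipartite slot graph is not an interval graph globally and a greedy sweep can get stuck. In addition, if the given $y^*$ is not a basic feasible solution the slot graph may have $\Theta(nm)$ edges, so even invoking a generic bipartite perfect-matching routine would exceed $O(n\log n)$. The Best-Fit argument of \cite{LJ+16} sidesteps all of this because it never constructs an assignment from $y^*$ at all: the fractional solution enters only in the analysis, while the algorithm is a self-contained $O(n\log n)$ sort-and-greedy. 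To salvage your route within the stated time bound you would need either to first pass to a basic LP solution (itself too expensive) or to replace the matching step by the Best-Fit greedy and then re-prove its optimality against the slot lower bounds --- at which point you have essentially reconstructed the cited proof rather than given an alternative one.
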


We are now ready to show the algorithm for rounding the MILP solution and its analysis.
	
	\begin{theorem}
		If there exists a solution to the MILP for the rounded instance $I'$ with the given guessed value $\cmm$, then there exists a feasible schedule to instance $I'$ of cost at most $(1+10\eps)\cmm$ that can be obtained in polynomial time.
	\end{theorem}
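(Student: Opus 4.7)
The plan is to first obtain a solution $(\textbf{v},\textbf{w},\textbf{x},\textbf{y},\textbf{z})$ to the MILP. Because only $\cont\cdot\conf+\cont+\conf$ variables are integral and this is a constant depending on $\eps$ alone, while the fractional variables and the constraints are polynomially many, Lenstra's algorithm (as extended by Kannan to MILPs) solves it within the time permitted to an EPTAS. The integral part alone determines the skeleton of the schedule: the number $v_c$ of machines receiving each configuration, the long containers and the aggregate short-container load placed inside each configuration copy (through $\alpha_{cl}$, $\textbf{z}$, and $\beta_c$), and the distribution of small and large jobs by their counters $S_{lkt}$ and $\gamma_{cl}$. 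All that remains is to turn the potentially fractional $\textbf{x}$ and $\textbf{y}$ into integer assignments of medium jobs to configuration copies and of tiny jobs to \epsb{}s of container copies.

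For the medium jobs I would invoke Lemma~\ref{best_fit} with the configuration copies playing the role of machines, $c_k$ set to an integer close to $\sum_j x_{jc}$, $t_k$ set to the right-hand side of \eqref{tot_med}, and $p_j$ the medium-job sizes; constraints \eqref{every_med_ass} and \eqref{tot_med} directly exhibit $\textbf{x}$ as a feasible fractional input to the lemma. The lemma's output is an integer assignment whose per-configuration size excess is at most $p_{\max}^{\text{med}}\le\theta\le\max\{\eps\cmm,1/\eps\}$, which adds at most $\eps\cmm$ to the load of any one machine (using $\cmm\ge 1$).

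The main obstacle is the rounding of the tiny jobs, because they must simultaneously respect per-\epsb{} size budgets and the interval \nbcon{} constraint \eqref{nbcon}. I would first aggregate $\textbf{y}$ into per-\epsb{} counts $Y_{kt}=\sum_{j\in\tj}y_{jkt}$; the matrix relating $Y$ to the family of constraints obtained from \eqref{nbcon} by restriction to a fixed $t$ has the consecutive-ones property and is therefore totally unimodular, and its right-hand sides are integers (after scaling by $w_t$). I would round $Y$ to an integer vector in the polytope so defined, then lift the rounded counts back to specific tiny jobs via Lemma~\ref{best_fit} applied once per container type with the $Y_{kt}$ as cardinality targets and the right-hand side of \eqref{tot_tiny} as the size budgets. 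The output is an integer job-to-\epsb{} assignment whose per-\epsb{} size excess is at most $p_{\max}^{\text{tiny}}\le\eps^2$. Coupling the interval-matrix rounding with the other constraints on $\textbf{y}$ (notably $\sum_{k,t}y_{jkt}=1$ for every tiny job and the total tiny mass equalities implied by $\textbf{w}$) is the delicate step that I expect to absorb most of the effort.

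Finally I would materialize the schedule inside each container following the nice template: idle time placed at the start of each \epsb{} or just after a job continuing from the previous block, jobs ordered by non-decreasing size within each \epsb{}, tiny/small jobs before medium before large on each machine, and unit idle periods at container boundaries and before the medium/large portion. The targeted $10\eps\cmm$ makespan excess would then be accumulated from at most $\eps\cmm$ for the Best-Fit rounding of medium jobs, $O(\eps\cmm)$ for the Best-Fit rounding of tiny jobs (controlled via $|K_t|\le 1/\eps^2$ and the bounded number of containers per configuration), $O(\eps\cmm)$ from the $+1$ unit slacks hidden in the counters $T_{kt}$, $D_{kt}$, $\delta_c$, and $\beta_c$, and $\eps\cmm$ for rounding container loads down to the next integer power of $1+\eps$; careful bookkeeping of these contributions is the only routine technicality beyond the tiny-job rounding itself.
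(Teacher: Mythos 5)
Your high-level plan --- solve the MILP via Lenstra/Kannan, use the integral variables to fix the configuration and container skeleton, then round the remaining fractional assignments --- matches the paper, and your treatment of the tiny jobs (aggregate $\textbf{y}$ to per-\epsb{} counts, round those counts to integers via a consecutive-ones / interval (hence totally unimodular) system that simultaneously enforces \eqref{nbcon}, then invoke Lemma~\ref{best_fit} with the rounded counts as cardinality targets) is essentially the paper's argument for the hard step.

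Two gaps remain. First, the integral $\textbf{z}$ variables only fix how many copies of each container go to each configuration \emph{type}, not to each individual machine; for short containers the only per-machine bound is the aggregate constraint \eqref{tot_cons}, so you still need a procedure (the paper uses a sequential greedy first-fit pass costing $2\eps\cmm$ additive slack, using that each short container has load at most $\eps\cmm$) to split the $z_{ct}$ short containers among the $v_c$ machines running configuration $c$. Your proposal treats this as already settled by the integral part, which it is not. Second, for the medium jobs you propose invoking Lemma~\ref{best_fit} with cardinality targets ``close to'' $\sum_j x_{jc}$; but to apply that lemma you must first produce \emph{integer} targets summing to $|\mj|$ together with a feasible fractional $x$ matching them --- exactly the preliminary rounding you carefully carry out for the tiny jobs and omit here (and the MILP imposes no cardinality constraint on medium jobs to guide that rounding). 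The paper avoids this entirely by assigning medium jobs with a simple sequential greedy paying $2\eps\cmm$ slack (each medium job has size at most $\eps\cmm$, since medium jobs exist only when $\theta=\eps\cmm$); this is both simpler than your route and closes the gap in your argument.
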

	\begin{proof}
		Let the MILP solution be denoted by \sol{} and is given as $(\textbf{v},\textbf{w},\textbf{x},\textbf{y},\textbf{z})$. Let $\confs$ be the set of feasible configurations and let $\conts$ be the set of feasible containers corresponding to this value of $\cmm$. We create a schedule from \sol{}, $\conts$, and $\confs$.
		
		\paragraph{Assigning configurations to machines.} Assign the configurations with $v_c>0$ in $\confs$ to the machines in a way that every machine is assigned exactly one configuration and for every $c$, the number of machines whose assigned configuration is $c$ will be exactly $v_c$. By constraint \eqref{conf_mach}, $\sum_{c\in\confs}v_c = |M|$, this is indeed possible. 
		
		\paragraph{Assigning large jobs to machines.} We assign large jobs to each machine based on the positions available for each distinct size of large jobs in the configuration assigned to that machine.  That is, for each size $l$ of large jobs, if the configuration assigned to $i$ has a given number of jobs of size $l$ assigned to it, then $i$ will be assigned exactly the same number of large jobs of that size. From constraint \eqref{lar_nos}, we assign all the large jobs to the machines.
		
		\paragraph{Assigning medium jobs.} For each machine we have the lower bound on the total size of the medium jobs that can be assigned on that machine based on the configuration assigned to the machine.  This lower bound is given by $\delta'_c\delta_c\eps^2\cmm$. Since $\delta_c$ was rounded down we increase the space available for the medium jobs by $\eps^2\cmm$. By constraint \eqref{tot_med}, we have enough space to assign all the medium jobs fractionally. We increase the total space available on each machine for medium jobs by another additive term of $\eps\cmm$. We assign the medium jobs in some order to machines in some order of the machines. In this sequential assignment, we move to the next machine if the medium job about to be assigned to the current machine will increase the total size of the medium jobs on the current machine beyond the space available on the current machine. Thus the total increase in makespan due to this step is at most $\eps^2\cmm+ \eps\cmm \leq 2\eps\cmm$.  This assignment of medium jobs will assign all these jobs due to the following reasons.  Assume by contradiction that some medium jobs are left unassigned by this process.  This means that whenever the procedure decided that some machine should stop being the current machine, the total size of medium jobs assigned to it will be at least $\delta'_c(\delta_c+1)\eps^2\cmm$ (this holds as the size of a medium job is not larger than $\eps \cmm$).  Since every machine stops being the current machine at some point, we get a contradiction to constraint  \eqref{tot_med}.  So indeed this is a feasible assignment of medium jobs.
		
		\paragraph{Assigning long containers to machines.} Long containers are assigned to the machines based on the positions available for each distinct rounded load of the long containers defined in the configuration assigned to the machine. The long containers of a common rounded load are assigned in an arbitrary order and from constraint \eqref{conl_nos} there are enough positions to assign all the long containers to the machines. 
		
		\paragraph{Assigning short containers to machines.} Based on the configurations of machines, we know a lower bound on the total load available for short containers on each machine. We increase this space by $\eps^2\cmm$ (the motivation for this increase is that the load bound of the short containers were rounded down). From constraint \eqref{tot_cons} we have enough space to assign fractionally all the short containers to the machines. We increase the total space available for short containers on each machine by another additive term of $\eps\cmm$. Similarly to the assignment of medium jobs, we apply the following iterative assignment rule. We assign the short containers listed in some order to machines where machines are ordered in some order. We move to the next machine if the short container about to be assigned to the current machine will increase the total load of the short containers on the current machine beyond the space available to short containers on this machine. The increase in makespan is at most $2\eps\cmm$.   Once again, assume by contradiction that some short containers are left unassigned by this process.  This means that whenever the procedure decided that some machine should stop being the current machine, the total load of short containers assigned to it is at least $\beta'_c(\beta_c+1)\eps^2\cmm$ (this holds as the load of a short container is not larger than $\eps \cmm$).  Since every machine stops being the current machine at some point, we get a contradiction to constraint  \eqref{tot_cons}.  So indeed this is a feasible assignment of short containers.
		
		\paragraph{Ordering the containers and medium and large jobs assigned to each machine.} The containers assigned to the machines are sorted in non-decreasing order of load and are placed sequentially one after the other (starting at time $0$). Recall that all containers contain a unit idle time at the end, and containers start and end at integer multiples of $\eps$. Next, starting at the end-point of the unit idle time of the last container assigned to the machine or at time $0$ if there is no such container, we place the medium and large jobs. These jobs are assigned to the machine and placed in non-decreasing order of size sequentially without idle time between consecutive such jobs. This does not increase the load of any machine. That is, at this point in time we have a schedule of containers and medium and large jobs with makespan at most $(1+4\eps) \cmm$.
		
\paragraph{Assigning small jobs to machines.} We consider the small jobs and positions for each distinct size of small jobs on each machine based on the containers assigned to that machine. The total positions available for each distinct size of small jobs on a machine is obtained by summing up all the positions available for that distinct size of small job over all \epsb s over all the containers assigned to the machine. From constraint \eqref{small_nos}, we have enough positions to assign all the small jobs of each size of small jobs to the machines and the \epsb s of the machines.		
		
\paragraph{Assigning the tiny jobs.}
We are left with the assignment of the tiny jobs that is the most delicate part of our proof.	In a nutshell we would like to use the MILP solution in order to define a fractional assignment of tiny jobs to \epsb s that maintain the total size of these jobs (to be similar to those defined by the containers), and the capacity, that is, the fractional number of those jobs in each \epsb, and then to use the analysis of Best-Fit in Lemma \ref{best_fit}.  However, the main difficulty in this approach is that the capacity bounds used in that lemma should be integers whereas the fractional assignment of tiny jobs have fractional cardinalities and we should first round those cardinality while enforcing the \nbcon.

We define \uepsb s. The \uepsb s are common to all machines and every \epsb\ on some machine appear as one of those \uepsb.  We treat such \uepsb\ as an ordered pair where the second component of the pair is the index of the machine and the first component is an index along the time horizon. Let $\widetilde{U}$ be the set of values of the first components of the \uepsb s, and let $U=\widetilde{U}\times M$ denote the set of all \uepsb s. Each \epsb{} $k$ in a container $t$ assigned to machine $i$ corresponds to a $u\in U$ with second component equal to $i$.  We  define $T_{u}$, $S_{lu}$, $P_{u}$, and $y_{ju}$ using $T_{u} = T_{kt}$, $S_{lu} = S_{lkt}$, $P_{u} = P_{kt}$ and $y_{ju} = y_{jkt}$ where $u$ corresponds to an \epsb\ $k$ of container $t$ assigned to machine $i$. 
		
 The MILP solution implies that $y_{jkt}$ amount of tiny job $j$ is assigned to start in \epsb{} $k$ of container $t$ in total on $w_t$ copies of container $t$. We define a temporary fractional assignment of tiny jobs that assigns $\frac{y_{jkt}}{w_t}$ fraction of job $j$ to start in \epsb\ $k$ of each occurrence of container $t$. The number of tiny jobs that starts in each $u\in U$ in this temporary fractional solution is $\Omega_{u} = \sum_{j\in \tj}y_{jkt}/w_t$  where $u$ corresponds to the \epsb{} $k$ of container $t$ assigned to machine $i$. 
 
 We will apply below the Best-Fit analysis to get integer assignments of tiny jobs. In order to apply it, for each $u\in U$, we need integer capacity bound. Denote it by $\Omega'_{u}$.  We will have $\Omega'_{u} = 0$ when $\Omega_{u} = 0$. The remaining $\Omega'$ values are obtained by using the \ref{LP} below. The first constraint is used to round down or up the $\Omega$ values (in order to not increase by too much the total size of tiny jobs in each \epsb\ in a new fractional assignment that we will construct). The second constraint enforces the \nbcon{} while rounding the $\Omega$ values. The third constraint ensures that there is enough positions to assign all the tiny jobs of the instance with the integer capacities.
		\begin{align*}\tag{LP}\label{LP}
			\lfloor \Omega_{u}\rfloor \leq\ &\Omega'_{u} \leq \lfloor \Omega_{u}\rfloor +1,\ \forall u\in U\\
			\sum_{u=(k,i)}^{(k+\ieps,i)} &\Omega'_{u} \leq w_t \left(B - \sum_{u=(k,i)}^{(k+\ieps,i)} \sum_{l\in\sm}S_{lu} - P_{(k,i)}\right),\ \forall (k,i): k \mbox{ and } k+1/\eps \\ &  \mbox{ are indexes of \epsb s on machine $i$ that belong to a common container $t$}, \forall t\\ 
			\sum_{u\in U}&\Omega'_{u} = |\tj|
		\end{align*}
		\ref{LP} has a feasible solution since $\Omega_{u}$ for all $u$ is a feasible solution. The constraint matrix of the given \ref{LP} satisfies that in every row the $1$'s appear consecutively for sorted list of the columns where we order the columns in a non-decreasing order of the machines (second component of $u$) and break ties in an increasing order of $k$ (first component of $u$).	That is, the constraint matrix satisfies the consecutive $1$'s property, and so it is a totally unimodular matrix. Since the constraint matrix of \ref{LP} is  totally unimodular, the right hand side is integral, and there is a fractional feasible solution for \ref{LP}, we conclude that  there exists an integer solution to $\Omega'_{u}$ that can be found in polynomial time by a basis crashing algorithm.  We get integer capacity bound for tiny jobs, for each \uepsb{} $u$. Denote by $\underline{U}\subseteq U$ the subset of indexed \uepsb s that has an integer capacity bound that is rounded down with respect to the $\Omega$ values and also those indexed \uepsb s that already had an integer capacity bound in $\Omega$. Let $\overline{U} = U\backslash \underline{U}$ be the subset of indexed \uepsb s that have integer capacity bounds that are rounded up and strictly larger than the $\Omega$ values. 
		
		In order to create a feasible fractional assignment of the tiny jobs with respect to the integer capacity bounds, we perform the below operations to alter the $y$ values. Let the altered $y$ values be denoted by $y'$. For each $u\in\underline{U}$, perform the following transformation. Let 
		\[
		y'_{ju} = \frac{y_{ju}}{\Omega_u}\cdot \Omega'_u,
		\]
		Thus for $u$,
		\[
		\sum_{j\in \tj} \frac{y'_{ju}}{w_t} = \sum_{j\in \tj} \frac{y_{ju}}{\Omega_uw_t}\cdot \Omega'_u = \Omega'_u, 
		\]   
		the $y'$ values satisfy the integer capacity bound. This transformation may leave some fractions of jobs unassigned from $u\in\underline{U}$. Merge the fractions of the common jobs to get the set of fractions $Y$ for the unassigned jobs. For each $u\in\overline{U}$ perform the following operations. Let 
		\[
			y'_{ju} = y_{ju}.
		\]
		In addition to these fractions, assign (fractionally) fractions from $Y$ to $u$ greedily till the total fraction of jobs in $u$ is exactly $\Omega'_u$, splitting a fraction of the same job when necessary. There is enough place to assign all the fractions in $Y$ since
		\[
			\sum_{u\in\underline{U}}\Omega'_u + \sum_{u\in\overline{U}}\Omega'_u = \sum_{u\in U}\Omega'_u = |\tj| = \sum_{j\in \tj}\sum_{t\in \conts}\sum_{k\in K_t} y_{jkt},
		\]
		from \ref{LP} and MILP. This transformation can increase the total size allocated to tiny jobs in each \uepsb{} by at most $\eps^2$.  Furthermore, we apply Lemma \ref{best_fit} to assign integrally the tiny jobs to the \uepsb\ and this increases the total size of tiny jobs in each \uepsb\ by another additive term of $\eps^2$. Thus, the new load of the schedule of a machine assigned configuration $c$ is increased due to the assignment of tiny jobs by at most $$2(1+4\eps) \cdot \frac{\cmm}{\eps} \cdot\eps^2 \leq 6 \eps\cmm .$$

 Thus the final makespan of the schedule is at most $(1+10 \eps)\cmm$ and all the operations are done in polynomial time.
	\end{proof}

We conclude by the sequence of lemmas and theorems that our problem admits an EPTAS and thus the following corollary holds.
\begin{corollary}
Problem \bcs\ admits an EPTAS.
\end{corollary}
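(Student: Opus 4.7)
The plan is to combine all the pieces developed in the previous sections into a single algorithm and to verify both its correctness and its running time. I would first enumerate the possible values of $\cm$: when $\cm<1$ the algorithm of \cite{LJ+16} applied with a uniform cardinality bound $B$ per machine already yields an EPTAS, and when $\cm\geq 1$ the guessing lemma of Section~\ref{sec:round} supplies at most $O(\log_{1+\eps} n)$ candidates, which is strongly polynomial. For each such $\cm$ I would further enumerate the $O_\eps(1)$ candidate values of $\cmm$ (the integer powers of $1+\eps$ lying in an interval of multiplicative width $1+17\eps$ around $\cm$, as justified by Lemma~\ref{nic_sol}).

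For each pair $(\cm,\cmm)$ the plan is to instantiate the MILP of Section~\ref{sec:MILP} and solve it by Lenstra's algorithm for mixed-integer linear programs with a fixed number of integer variables \cite{lenstra1983,kannan1983}. The integer part of the MILP has only $\cont\cdot\conf+\cont+\conf$ variables, which depends on $\eps$ alone, while the number of continuous variables and of constraints is polynomial in the encoding length of the input; so Lenstra/Kannan gives a running time of the form $f(1/\eps)\cdot\mathrm{poly}(n)$. Theorem~\ref{mil_obj_val} guarantees that for the correct $\cmm$ the MILP is feasible, and the rounding theorem that immediately precedes the corollary turns any feasible MILP solution, in polynomial time, into an honest schedule of makespan at most $(1+10\eps)\cmm$.

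Back-translating to the original, unrounded instance, the output schedule then has makespan at most
\[
\oeps\cdot(1+17\eps)\cdot\oeps\cdot(1+10\eps)\cdot\copt\;\leq\;(1+c\eps)\,\copt
\]
for an absolute constant $c$ (roughly $c\leq 30$ for small $\eps$): the first $\oeps$ absorbs the size-rounding step, $1+17\eps$ the nice-schedule conversion of Lemma~\ref{nic_sol}, the second $\oeps$ the geometric rounding of the guessed $\cmm$, and $1+10\eps$ the rounding of the MILP solution. Rerunning the whole procedure with $\eps$ replaced by $\eps/c$ yields a $(1+\eps)$-approximation while keeping the total running time of the form $f(1/\eps)\cdot\mathrm{poly}(n)$, which is exactly the EPTAS requirement.

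The main obstacle is not located in this corollary, which is essentially bookkeeping, but was already dealt with in the proof of the rounding theorem — most notably the rounding of the fractional tiny-job assignment via the totally unimodular auxiliary \ref{LP} together with the Best-Fit lemma. What I still have to check here is only that the outer enumeration over $\cm$ and $\cmm$ is strongly polynomial, that the integer part of the MILP has size bounded by a function of $\eps$ alone so that Lenstra's algorithm applies, and that the accumulated multiplicative losses indeed collapse to $(1+\eps)$ after one rescaling of $\eps$.
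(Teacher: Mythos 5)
Your proposal is correct and matches the paper's intent: the corollary is presented there as an immediate consequence of the preceding lemmas and theorems (guessing $\cm$ and $\cmm$, solving the MILP via Lenstra/Kannan with $O_\eps(1)$ integer variables, rounding the MILP solution, and composing the multiplicative losses). The bookkeeping you supply — strongly polynomial enumeration, polynomial size of the continuous part, and the final rescaling of $\eps$ — is exactly the reasoning the paper leaves implicit.
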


\end{document}